\newtheorem{theorem}{Theorem}
\newtheorem{corollary}{Corollary}
\begin{document}


\title{Linked cluster expansion on trees} \author{Deepak Iyer}
\affiliation{Department of Physics \& Astronomy, Bucknell University,
  1 Dent Dr, Lewisburg, PA 17837, USA} \author{Yuyi Wan}
\affiliation{Department of Physics and Astronomy, University of Notre
  Dame, 225 Nieuwland Science Hall, Notre Dame, IN 46556, USA}
\date{\today}

\begin{abstract}
  The linked cluster expansion has been shown to be highly efficient
  in calculating equilibrium and nonequilibrium properties of a
  variety of 1D and 2D classical and quantum lattice models. In this
  article, we extend the linked cluster method to the Cayley tree and
  its boundaryless cousin the Bethe lattice.  We aim to (a) develop
  the linked cluster expansion for these lattices, a novel
  application, and (b) to further understand the surprising
  convergence efficiency of the linked cluster method, as well as its
  limitations. We obtain several key results. First, we show that for
  nearest-neighbor
  Hamiltonians of a specific form, all finite tree-like clusters can be mapped
  to one dimensional finite chains. We then show that the qualitative
  distinction between the Cayley tree and Bethe lattice appears due to
  differing lattice constants that is a result of the Bethe lattice
  being boudaryless. We use these results to obtain the explicit
  closed-form formula for the zero-field susceptibility for the entire
  disordered phase upto the critical point for Bethe lattices of
  arbitrary degree; remarkably, only 1D chain-like clusters
  contribute. We also obtain the exact zero field partition function
  for the Ising model on both trees with only the two smallest
  clusters, similar to the 1D chain. Finally, these results achieve a
  direct comparison between an infinite lattice with a
  non-negligible boundary and one without any boundary, allowing us to show
  that the linked cluster expansion eliminates boundary terms
  at \emph{each order} of the expansion, answering the question about
  its surprising convergence efficiency. We conclude with some
  ramifications of these results, and possible generalizations and
  applications.
\end{abstract}

\maketitle

\section{Introduction}
\label{sec:introduction}

Amongst the wide variety of lattice models used in statistical
mechanics, a small handful are exactly solvable in the thermodynamic
limit\cite{baxter2007}. Examples include the 2D Ising model on a
square lattice, various ``ice'' models, and some quantum models in one
dimension such as the Heisenberg model, the nonlinear Schr\"odinger
equation, and some relativistic models\cite{thacker81,korepin1993}.
Whereas these models have demonstrated surprisingly wide
applicability, in many situations we are forced to lift certain
assumptions, rendering them no longer exactly solvable.  Besides, even
in cases where exact solutions are available, not all physical
quantities can be calculated via closed form expressions, or easily
translated into an experimentally relevant language\cite{korepin1993}.

We are then left with approximate solutions. The success of an
approximation method often has to do with the underlying physics. For
instance, it is notoriously hard to obtain good approximations for
long range order, precisely because in order to capture long range
correlations, one needs large system sizes, and any approximation that
relies on truncating the system size can only give us hints of what
long range order might lie beyond\cite{oitmaa2006}. Perturbative
methods that work directly on infinitely large systems can overcome
this issue, but often do not allow easy access to the parameter
regimes where long range order appears. Indeed, strongly correlated
physics is perhaps the most elusive physics to effectively
model. Similarly, strongly out of equilibrium physics like a quantum
quench is often intractable because simple low-energy approximations
fail\cite{mitra2018}.

The most basic approximation method for a lattice model relies on
studying the properties of small systems as a function of the system
size and carrying out an appropriate scaling, or attempting an
extrapolation from a trend. Simple extrapolations will by definition
fail to capture singularities, which indicate phase transitions, since
these are sudden deviations from the behavior away from the
singularity. Nevertheless, these methods are very effective away from
a phase transition when we are deep in a particular phase.  Other
methods need to and can be employed in combination to recognize where
these phase boundaries lie\cite{baxter2007}. Within such finite size
approximations, the particular statistical ensemble and boundary
conditions used play a strong role in
convergence, with open boundary conditions giving rise to $O(1/N)$ errors, where $N$ is the system size, coming from the boundary of each finite size system\cite{iyer2015,*iyer2015err}.
Further, cluster methods such
as the linked cluster
expansion\cite{sykes66,domb60a,*domb60b,tang_khatami_13} seem to do
even better\cite{iyer2015}. However, reasons for the latter are not
known. In this article, we resolve this outstanding issue by showing that the linked
cluster expansion eliminates the boundary contribution at 
\emph{each order} of the expansion.

The linked cluster expansion has found tremendous
application in classical and quantum systems, especially in recent
studies pertaining to their
dynamics\cite{rigol2014,tang2015a,tang2015b,mallayya2017}, as well as
in studies of disordered or inhomogeneous
systems\cite{tang2015a,tang2015b,mallayya2021,gan2020} (see articles cited in
these references for earlier work) and periodically driven systems using Floquet Hamiltonians\cite{zhang2016}, and has proved to be a remarkably
effective method for approximating lattice models. At an intuitive
level, link cluster expansions of the kind used here operate by
singling out the ``new'' contributions to an extensive quantity at any
stage of the finite size approximation, by effectively canceling out
contributions that are merely appearing from the smaller systems
embedded in a larger system. For example, if we know the physics for a
system size $N_1$, and all of the physics in a system of size $N_2$
appears due to the multiple copies of the smaller system, then at the
higher level, the linked cluster expansion gives us a \emph{zero}
contribution. It appears to be a more efficient method to ``extract
the physics'' at smaller system sizes as has been observed in the
works referenced above.

In this article, we continue exploring the linked cluster expansion
(LCE) in the context of exponentially growing tree lattices, and study
if the above improvements in convergence efficiency prevail.  The
systems so far studied using the linked cluster expansions are regular
lattices, occasionally with disorder.  From the physical standpoint,
tree lattices are unusual given their exponentially growing structure
and in the case of a Cayley tree, the presence of a boundary that has
as many vertices as the entire bulk --- this latter property upends
the common wisdom that in the thermodynamic limit, the boundary does
not significantly contribute to the bulk properties. The Bethe lattice, on the other hand,
does not have a boundary at all, and looks the same from every vertex. Despite these unusual properties, they
have proved to be exceptionally useful lattices to study several
models on. The Ising model on the Bethe lattice bears similar
thermodynamics to the mean field approximation, and has the same
critical exponents\cite{baxter2007}.  More recently, the Bethe lattice
has proved useful in studies of Anderson and many-body
localization\cite{chacra1973,basko2006,savitz2019}. Cayley trees and
Bethe lattices have been extensively studied and find application in a
variety of problems \cite{ostilli2012}. Tree lattices are
often studied using simple finite size approximations or
self-similarity based methods. The latter can be used to provide
implicit expressions for the magnetization of the Ising model in both
ordered and disordered phase, as well as reveal information about the
critical point via a set of exact self-consistent equations that can
be solved numerically to very high accuracy\cite{baxter2007}.

Our goal in this article is twofold -- to study microscopically how
the linked cluster expansion works in the context of a simple nearest
neighbor model on a Cayley tree and its infinite/boundaryless and
rootless sibling the Bethe lattice, and also to understand its surprising convergence
efficiency.   In the following sections, we
systematically develop the linked cluster expansion on trees,
establish equivalences between finite trees and one dimensional
chains, and calculate the exact zero-field partition function for the
Ising model on both types of trees.  We then go on to study the weak
field approximation with a hope to extract the critical temperature
and indeed show that this is possible within the linked cluster
expansion framework, showing a first example of a model where the
$N=2$ system is capable of giving us the critical point and the exact
formula for the susceptibility at zero field. We use these results that allow us to compare
how the linked cluster expansion operates on the Cayley tree and the Bethe lattice to conclude
that the convergence efficiency of the linked cluster expansion is because it
eliminates boundary contributions (known to be the source of poor convergence in systems
with open boundary conditions\cite{iyer2015}) at each stage of the expansion.

\section{Definitions}
\label{sec:defs}

Cayley trees and Bethe lattices are tree graphs, i.e., graphs that are
connected and do not have any loops. In other words, it is not
possible to make a circuit and return to the starting point (vertex)
without retracing one or more edges. The absence of loops is crucial
from a physical standpoint. As an example, the reason the Ising model
on a square lattice differs from the mean-field approximation on a
lattice with the same vertex degree is because of the presence of
loops; without the loops the model can be studied using a
Bethe-Peierls approximation of the appropriate vertex degree, and
gives different critical exponents. Note that a Hamiltonian with next
nearest neighbor hopping or interaction fundamentally destroys the
tree structure by creating loops; we do not consider such models. In
other words, we assume a Hamiltonian that has only nearest neighbor
interactions or hopping.

An $m$-Cayley tree is constructed by starting with one vertex and
drawing $m\geq3$ edges out from it. From each of these new vertices in
the first ``shell'', $m-1$ new edges emerge (for a total of $m$ edges
at each vertex). The Cayley tree therefore grows symmetrically and can
be terminated at any shell. The outermost shell has vertices that are
attached to only one edge each. It is finite, and one can meaningfully
ask a question about the infinite or thermodynamic limit.

An $m$-Bethe lattice on the other hand has no center (root vertex) and
no boundary.  It is a connected graph where \emph{every} vertex is
attached to $m$ edges without creating any loops.  The graph is
therefore entirely self-similar and appears the same from every
vertex.  It is infinite, and there is no meaningful finite subset of
it\cite{ostilli2012}. Nevertheless, as we show below, we can use the
linked cluster expansions that relies on computing properties on
progressively growing finite clusters.  It is critical to note that a
simple finite size extrapolation based on finite clusters is
unreliable and will generally fail on the Bethe lattice since it does
not appropriately account for the absence of a boundary.

Fig.~\ref{fig:cayleybetheexample} shows some illustrations of the two
trees. To simplify the visualization, we choose $m=3$.  In what
follows, we also restrict to $m=3$, noting that all results are
generalizable to arbitrary integer $m\geq 3$ (the $m=2$ case is the
one dimensional chain).
\begin{figure}[h]
  \centering
  \begin{tikzpicture}[scale=0.5]
    \tikzstyle{vertex}=[circle,draw,fill,inner sep=2] \node[vertex]{}
    child[grow=90] {node[vertex]{} child[grow=30] {node[vertex]{}}
      child[grow=150] {node[vertex]{}} } child[grow=210]
    {node[vertex]{} child[grow=150] {node[vertex]{}} child[grow=270]
      {node[vertex]{}} } child[grow=330] {node[vertex]{}
      child[grow=30] {node[vertex]{}} child[grow=270] {node[vertex]{}}
    };
  \end{tikzpicture}
  \hspace{5pt}
  \begin{tikzpicture}[scale=0.5]
    \tikzstyle{vertex}=[circle,draw,fill,inner sep=2] \node[vertex]{}
    child[grow=90] {node[vertex]{} child[grow=30] {node[vertex]{}
        child[grow=330] {node{} edge from parent[dashed]}
        child[grow=90] {node{} edge from parent[dashed]} }
      child[grow=150] {node{} edge from parent[dashed]} }
    child[grow=210] {node[vertex]{} child[grow=150] {node[vertex]{}
        child[grow=90] {node{} edge from parent[dashed]}
        child[grow=210] {node{} edge from parent[dashed]} }
      child[grow=270] {node{} edge from parent[dashed]} }
    child[grow=330] {node[vertex]{} child[grow=30] {node{} edge from
        parent[dashed]} child[grow=270] {node{} edge from
        parent[dashed]} };
  \end{tikzpicture}
  \caption{A finite Cayley tree with $m=3$ and $N=10$ vertices (left),
    and a Bethe lattice with $m=3$ (right).  The dashed lines signify
    the infinite continuation of the tree structure.}
  \label{fig:cayleybetheexample}
\end{figure}
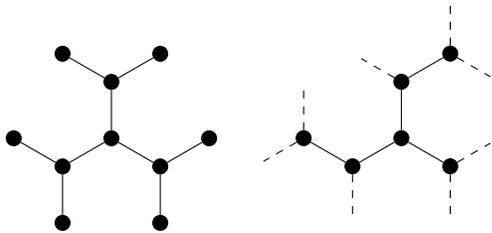

\section{Partition functions on finite tree graphs}
\label{sec:Ztree}

On a finite tree graph like the Cayley tree, for $N$ vertices, there
are always $N-1$ edges, since growing the lattice always involves
adding one or more vertices and an equal number of edges.  Consider a
classical nearest neighbor spin Hamiltonian given by
\begin{equation}
  \label{eq:1}
  H = \sum_{\langle ij\rangle}H_{ij}(s_{i},s_{j})
\end{equation}
The grand partition function is given by a sum over all possible spin
configurations:
\begin{equation}
  \label{eq:2}
  Z = \sum_{\{s_{j}\}} e^{-\beta \sum_{\langle ij\rangle}H_{ij}(s_{i},s_{j})}.
\end{equation}
The linked cluster expansion being a series expansion cannot use the
canonical partition function since the system size, and therefore
total magnetization (or charge or particle number) changes from one
order to the next.  Consider a spin that can take $q$ different
values. In what follows, we restrict ourselves to Hamiltonians of the
form
\begin{equation}
  \label{eq:6}
  H = \sum_{\langle i j \rangle} H_{ij}(|s_i - s_j|).
\end{equation}
The reason for the restriction will become clear in the theorem
below---in short, it ensures that all possible values of $H_{ij}$ can
be obtained by changing only one spin in the pair. For now, we note
that the 1D spin-1/2 Ising model can indeed be cast into the above
form as $H_{ij} = J [(s_i-s_j)^2/2-1]$.  Other examples of
Hamiltonians that have this form include the standard $q$-state Potts
model, given by $H_{ij} = -J\delta_{s_is_j}$, and its cyclic form,
given by $H_{ij} = -J\cos[2\pi(s_i-s_j)/q]$.  The theorem, however,
does not apply to the spin-1 Ising model, for example, which cannot be
cast into this form.


With this constraint on $H$, we show that the partition function for a
given $H$ is identical on all tree graphs with the same number of
vertices $N$.

\begin{theorem}
  \label{thm:tree1d}
  For every finite $m$-tree with a given vertex (spin) configuration
  (or vertex set) $\{v_{i}\}$ and corresponding edge configuration
  (edge set) $\{e_{ij}\}$ obtained from a Hamiltonian
  $H_{ij}(v_{i},v_{j})$ such as in Eq.~\eqref{eq:6}, there exists an
  equivalent finite 1D chain with the same edge set, and therefore the
  same value of the Hamiltonian.
\end{theorem}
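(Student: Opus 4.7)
The plan is to prove the theorem by induction on the number of vertices $N$. The base case $N = 2$ is trivial: a two-vertex tree consists of a single edge and is already a 1D chain.

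For the inductive step, I would pick any leaf $\ell$ of the $N$-vertex tree $T$ (such a leaf exists for $N \geq 2$), let $u$ be its unique neighbor, and set $d = |v_\ell - v_u|$, so that the edge $(u,\ell)$ contributes $H_{u\ell}(d)$. Deleting $\ell$ together with its incident edge produces a subtree $T'$ on $N-1$ vertices whose spin configuration and edge set are inherited from $T$. By the inductive hypothesis, $T'$ is equivalent to some 1D chain $s_1, s_2, \ldots, s_{N-1}$ whose edge set matches that of $T'$. I would then extend this chain by appending a new spin $s_N$ chosen so that $|s_N - s_{N-1}| = d$, thereby restoring the contribution of the deleted edge. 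The resulting chain $s_1, \ldots, s_N$ then carries the same multiset of edge values as $T$, and hence gives the same Hamiltonian value, completing the induction.

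The main obstacle --- and really the only nontrivial step --- is guaranteeing that such an $s_N$ exists. This is exactly where the restriction to Hamiltonians of the form $H_{ij}(|s_i - s_j|)$ of Eq.~\eqref{eq:6} enters: for any fixed $s_{N-1}$, the map $s_N \mapsto |s_N - s_{N-1}|$ must cover every difference value that $H_{ij}$ distinguishes, so that the target value $d$ can always be realized by varying only the appended spin. This is precisely the paper's remark that ``all possible values of $H_{ij}$ can be obtained by changing only one spin in the pair.'' In the standard examples (the spin-$1/2$ Ising model, standard Potts, cyclic Potts) this property holds because of the binary or cyclic-group structure of the spin space; it is also the content of the paper's observation that a spin-$1$ Ising interaction of the form $-J s_i s_j$ cannot be cast into this form and would break the construction at the grafting step. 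Once this step is in hand, the rest of the argument is a straightforward leaf-pruning induction, and it immediately implies that both the per-configuration Hamiltonian and (by summing over all spin assignments) the partition function on any finite tree with $N$ vertices agree with those of an appropriate 1D chain on $N$ vertices.
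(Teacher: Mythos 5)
Your leaf-pruning induction is the time-reversed version of the paper's own argument, which builds the tree from the chain by re-attaching the last edge one vertex at a time and repairing a single spin after each ``move''; the key lemma (any edge value is realizable by varying one endpoint of the pair, guaranteed by the form of Eq.~\eqref{eq:6}) is identical, so your proof of the theorem is correct and essentially the same. The only caveat concerns your closing claim about the partition function: that requires the tree--chain correspondence to be a bijection on configurations (equivalently, that the \emph{number} of spin choices realizing each edge value is independent of the fixed neighboring spin), a point the paper makes explicit and which you should verify before summing over all spin assignments.
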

\begin{proof}

  We first note that for the Hamiltonians we consider, the value of
  the Hamiltonian depends only on the edge set
  $e_{ij} = H_{ij}(s_i,s_j)$.

  We show the result by constructing the tree with the vertex set
  $\{v_{i}\}$ from a 1D chain by maintaining the same edge set while
  ensuring that there is a one-one mapping between the vertex set of
  the chain and the vertex set of the tree.


  Consider now a specific configuration of spins (vertices)
  $s_{1},\ldots,s_{N}$ of a 1D chain. This corresponds to a specific
  edge configuration $e'_{12},\ldots,e'_{N-1,N}$ given by the
  Hamiltonian $H_{ij}$.  We show that there is another vertex
  configuration $v_{1},\ldots,v_{N}$ on the desired tree graph that
  leads to the same edge configuration (and therefore the same value
  of the Hamiltonian) that obtains from the rearrangement of bonds
  produced by transforming the 1D chain into the tree.  The finite 1D
  chain Hamiltonian is given by
  \begin{equation}
    \label{eq:3}
    H = \sum_{j=1}^{N-1}H_{j,j+1}
  \end{equation}
  We define a ``move'' $M_{i}$ that takes the last available edge from
  the original 1D chain and connects it to the $i$-th vertex,
  retaining the labeling of the original chain. For example, $M_{2}$
  would move the last edge and join it to the second vertex producing
  one degree 3 vertex.  The Hamiltonian after this move becomes
  \begin{equation}
    \label{eq:4}
    H(M_{2}) = \sum_{j=1}^{N-2}H_{j,j+1} + H_{2,N}
  \end{equation}
  Fig.~\ref{fig:move} shows an example of such moves.
  \begin{figure}[h]
    \centering
    \begin{tikzpicture}
      \node[shape=circle,draw=black,fill,inner sep=2pt, label = 1] (a)
      at (1,0) {}; \node[shape=circle,draw=black,fill,inner sep=2pt,
      label = 2] (b) at (2,0) {};
      \node[shape=circle,draw=black,fill,inner sep=2pt, label = 3] (c)
      at (3,0) {}; \node[shape=circle,draw=black,fill,inner sep=2pt,
      label = 4] (d) at (4,0) {};
      \node[shape=circle,draw=black,fill,inner sep=2pt, label = 5] (e)
      at (5,0) {}; \node[shape=circle,draw=black,fill,inner sep=2pt,
      label = 6] (f) at (6,0) {}; \path[-] (a) edge (b); \path[-] (b)
      edge (c); \path[-] (c) edge (d); \path[-] (d) edge (e); \path[-]
      (e) edge (f);
    \end{tikzpicture}\\
    $\downarrow M_2$\\
    \begin{tikzpicture}
      \node[shape=circle,draw=black,fill,inner sep=2pt, label = 1] (a)
      at (1,0) {}; \node[shape=circle,draw=black,fill,inner sep=2pt,
      label = below:2] (b) at (2,0) {};
      \node[shape=circle,draw=black,fill,inner sep=2pt, label = 3] (c)
      at (3,0) {}; \node[shape=circle,draw=black,fill,inner sep=2pt,
      label = 4] (d) at (4,0) {};
      \node[shape=circle,draw=black,fill,inner sep=2pt, label = 5] (e)
      at (5,0) {}; \node[shape=circle,draw=black,fill,inner sep=2pt,
      label = 6] (f) at (2,1) {}; \path[-] (a) edge (b); \path[-] (b)
      edge (c); \path[-] (c) edge (d); \path[-] (d) edge (e); \path[-]
      (b) edge (f);
    \end{tikzpicture}\\
    $\downarrow M_3$\\
    \begin{tikzpicture}
      \node[shape=circle,draw=black,fill,inner sep=2pt, label = 1] (a)
      at (1,0) {}; \node[shape=circle,draw=black,fill,inner sep=2pt,
      label = below:2] (b) at (2,0) {};
      \node[shape=circle,draw=black,fill,inner sep=2pt, label = 3] (c)
      at (3,0) {}; \node[shape=circle,draw=black,fill,inner sep=2pt,
      label = 4] (d) at (4,0) {};
      \node[shape=circle,draw=black,fill,inner sep=2pt, label =
      below:5] (e) at (3,-1) {};
      \node[shape=circle,draw=black,fill,inner sep=2pt, label = 6] (f)
      at (2,1) {}; \path[-] (a) edge (b); \path[-] (b) edge (c);
      \path[-] (c) edge (d); \path[-] (c) edge (e); \path[-] (b) edge
      (f);
    \end{tikzpicture}
    \caption{Example of a series of moves $M_j$ used in Theorem
      \ref{thm:tree1d} on a $N=6$ chain.}
    \label{fig:move}
  \end{figure}
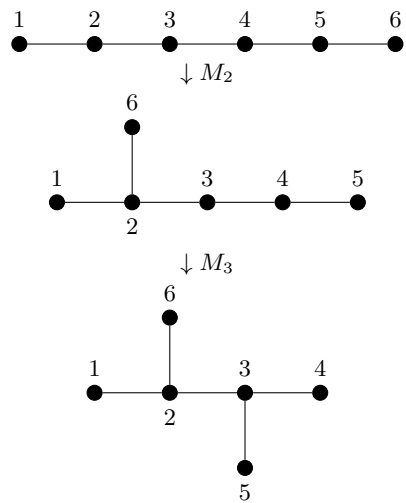
  Any tree can be formed from a sequence of such moves. If the new
  edge $e_{2,N} = e'_{N-1,N}$, then we retain the same edge set. If
  they are not equal, we change $s_{N}\to v_{N}$ such that
  $e_{2,N} = e'_{N-1,N}$. Thus we obtain a new vertex set
  $s_{1},\ldots,s_{N-1},v_{N}$.  For this to work, we require that all
  edge values can be achieved by changing \emph{one} vertex in a
  pair. For the spin-1/2 Ising model, this requirement is clearly
  satisfied. For the $q$-state Potts model, which is a many to two
  mapping, if $e'_{N-1,N} = -J$, then we require $v_N=s_2$ so that
  $e_{2,N} = 1$. This can always be arranged.  For $e'_{N-1,N}=0$, we
  require $v_N\neq s_2$. For $s_1,\ldots,s_{N-1}$ fixed, there are
  $q-1$ equivalent values that $s_N$ can take that result in the same
  value of $H$ on the chain. This is also true on a tree derived from,
  say, $M_2$. Therefore each vertex configuration of the chain can be
  mapped on to a unique vertex configuration of the tree, ensuring
  that we have the same number of configurations with a given
  energy. The argument however fails for the spin-1 Ising
  model. There, if $s_2=0$, then no matter the value of $v_N$, we
  cannot achieve a value different from 0 for $e_{2N}$. This failure
  of the spin-1 Ising has to do with it not satisfying the requirement
  that it have a Hamiltonian of the form in Eq.~\eqref{eq:6}, which
  ensures that all edge values can be achieved by changing one spin of
  a pair, since for $i\in \{1,\ldots,q\}$, $\{|s_i - s_j|\}$ is the
  same set for all values of $j$.

  Any tree can thus be constructed by a sequence of moves $M_{i}$, and
  after every move, we can restore the edge set by making one change
  to the vertex set, ensuring that the mapping is 1-1. In this way,
  every vertex set of the 1D chain goes to an equivalent vertex set of
  any tree with the same edge set. We have therefore established, by
  construction, a one-one mapping between vertex sets of the 1D chain
  and that of any tree that leaves the edge set invariant. This
  implies that there is a mapping between a 1D chain Hamiltonian a
  corresponding tree Hamiltonian that has the same numerical value,
  but possibly a different vertex set.
\end{proof}
Note that the equivalence above is broken by any term in the
Hamiltonian that is dependent on the vertex set, such as an external
magnetic field, or if the Hamiltonian cannot be put into the form in
Eq.~\eqref{eq:6}.
\begin{corollary}
  The partition function (defined in Eq.~\eqref{eq:2}) on any finite
  tree graph on $N$ vertices is identical to the partition function on
  a 1D chain with $N$ vertices for Hamiltonians of the form in
  Eq.~\eqref{eq:6}.
\end{corollary}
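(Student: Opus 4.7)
The plan is to derive the corollary as an essentially immediate consequence of Theorem~\ref{thm:tree1d}. The partition function in Eq.~\eqref{eq:2} is a sum of Boltzmann weights $e^{-\beta H}$ over the $q^{N}$ vertex configurations $\{s_{j}\}$, and since $H$ is a function only of the edge set $\{e_{ij}\}$ under the assumption in Eq.~\eqref{eq:6}, the weight of a configuration is determined entirely by its edge set. Thus it suffices to show that the multiset of edge sets generated by summing over all vertex configurations of the 1D chain coincides with that generated by summing over all vertex configurations of the tree.

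First I would fix an arbitrary finite $m$-tree $T$ on $N$ vertices and a 1D chain $C$ on the same $N$ vertices. Theorem~\ref{thm:tree1d} provides, through the sequence of moves $M_{i}$ and the accompanying single-vertex adjustments, a map $\Phi\colon \{s_{j}\}_{C} \to \{v_{j}\}_{T}$ that preserves the edge set, and hence the value of $H$. Next I would argue that $\Phi$ is a bijection: the construction in the theorem explicitly asserts the one-one property, and the two configuration spaces have the same cardinality $q^{N}$, so bijectivity is guaranteed once injectivity (or surjectivity) is in hand.

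Given the bijection $\Phi$, the partition sums transform by a relabeling of summation variables,
\begin{equation}
Z_{T} = \sum_{\{v_{j}\}} e^{-\beta H(\{v_{j}\})} = \sum_{\{s_{j}\}} e^{-\beta H(\Phi(\{s_{j}\}))} = \sum_{\{s_{j}\}} e^{-\beta H(\{s_{j}\})} = Z_{C},
\end{equation}
where the middle equality uses the preservation of the edge set under $\Phi$ and hence of $H$. Since $T$ was arbitrary, this establishes the claim for every finite tree on $N$ vertices.

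The main obstacle I anticipate is the bookkeeping around the bijectivity step for non-Ising models such as the $q$-state Potts model, where multiple vertex changes can produce the same edge value. This is precisely the point flagged in the theorem's proof where the form in Eq.~\eqref{eq:6} is used to ensure that all edge values are attainable by changing a single spin of a pair, with the correct multiplicity. I would simply cite that portion of Theorem~\ref{thm:tree1d} rather than redoing the counting, and note that the corollary inherits exactly the same restriction on $H$ for the same reason.
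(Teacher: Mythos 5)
Your argument is correct and is essentially the paper's own proof spelled out in more detail: the paper likewise derives the corollary from the one-one, edge-set-preserving map of Theorem~\ref{thm:tree1d} together with the fact that the grand partition function sums over all $q^{N}$ vertex configurations, and its remark following the corollary makes exactly your two points about injectivity guaranteeing each configuration is counted once and the edge-set dependence of $H$ guaranteeing equal Boltzmann weights. No gaps; the explicit change-of-variables display is a fine (if slightly more verbose) way to present the same reasoning.
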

\begin{proof}
  This follows from Theorem \ref{thm:tree1d} and the fact that the
  grand partition functions sum over all vertex (spin) configurations.
\end{proof}
Note that the one-one mapping in the above construction ensures that
each of the $q^N$ configurations is counted only once, and the constraint that the
Hamiltonian depends only on the edge set ensures that a different
correspondence between energies (numerical value of the Hamiltonian)
and spin configuration in the tree, relative to the chain, does not
impact the result. We reiterate that this latter part is violated by
terms such as an external magnetic field, as considered in Section
\ref{sec:ising-model-mag}.
  
In the following, this equivalence will become central to some of the
results we derive using the linked cluster expansion. Nevertheless,
caution is warranted when it comes to tree graphs --- our results
above imply that the zero-field Ising partition function on the Cayley
tree (which has a well-defined infinite volume limit) must be identical to the 1D model. This is true, nevertheless,
the Cayley tree shows a finite temperature critical point as shown by
the zero-field susceptibility\cite{matsuda1974}.

\section{Counting clusters}
\label{sec:counting-clusters}

The linked-cluster expansion requires an enumeration of the number of
embeddings of a subgraph $H$ in a graph $G$, also known as a lattice
constant\footnote{Sykes \emph{et al}\cite{sykes66} make a distinction
  between a strong embedding and a weak embedding. A strong embedding
  implies that all edges between vertices present in the graph are
  present in the cluster, whereas a weak embedding does not require
  that. The LCE we use relies on weak embdeddings.}. Following Sykes
\emph{et al}\cite{sykes66}, the weight of a particular cluster (graph) $c$ in the
expansion is given by
\begin{equation}
  \label{eq:8}
  W_{c}(O) = O(c) - \sum_{s\in c} M_{s}W_{s}(O)
\end{equation}
where $s$ are all subgraphs that can be embedded in cluster $c$. In
this expression, $O$ corresponds to any extensive observable such as
the logarithm of the partition function, or quantities that can be
derived from it, and $M_{s}$ corresponds to the multiplicity of the
subgraph $s$ in the graph $c$. This quantity, also known as a lattice
constant, provides an enumeration of the number of ways in which $s$
can be embedded in $c$.  Equation \eqref{eq:8} then gives us an
iterative procedure where the weight of the smallest cluster is equal
to the value of the observable; other weights can be obtained
sequentially.  Once the weights are obtained, the value of the
observable per unit volume can be obtained via
\begin{equation}
  \label{eq:18}
  \lim_{N'\to\infty,N'\gg N}\frac{O}{N'} = \sum_c M'_c W_c,
\end{equation}
where $M'_c$ are multiplicities per unit volume of the infinite
system. In other words, the $M_c'$ enumerate how many ways the
cluster $c$ can be embedded in a much larger system ($N'\gg N$)
divided by the number of vertices (or a corresponding volume-like
quantity) of that larger system.

For the classical Ising model on a 1D chain (with no magnetic field),
one can show that $W_{j}=0$ for $j\geq 3$ and
$O=\log Z$\cite{iyer2015}\footnote{This results generalizes to the
  $q$-state Potts model as can be verified directly by computing
  $\log Z_N$ for $N=1,2,3,4,\ldots$ and calculating the
  corresponding weights. By obtaining a formula for $\log Z_N$, we can show
  that $W_j=0$ for $j\geq 3$, just as in the spin-$\nicefrac12$ classical Ising
  model}.  Given Theorem \ref{thm:tree1d}, it follows that this
statement is true on all trees for all quantities that can be derived
from the partition function, since the clusters used to calculate
weights are all finite clusters.

We are interested in calculating a generic linked cluster expansion
for a Bethe lattice.  In this, we use the distinction made by Baxter
between a Bethe lattice and a Cayley tree.  On a Cayley tree, the
number of boundary vertices (that are attached to only one edge) scale
with the number of vertices. For a tree of degree 3 with $N$ vertices,
the number of boundary points is $N/2+1$.  Therefore, the boundary
does not become irrelevant in the infinite volume limit. On the other
hand, the Bethe lattice does not have a boundary and cannot be thought
of as the ``bulk'' of a Cayley tree, since there is no consistent way
to terminate this ``bulk''; it is always infinitely large.
Nevertheless, we will see that in the linked cluster expansion we can
treat both the Bethe lattice, and the Cayley tree in the infinite
volume limit.

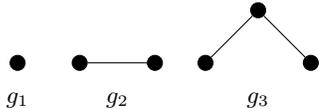
\begin{figure}[h]
  \centering
  \begin{tikzpicture}
    \node[shape=circle,draw=black,fill,inner sep=2pt] (a) at (0,0) {};
    \node (d) at (0,-0.5) {$g_{1}$};
  \end{tikzpicture}\hspace{10pt}
  \begin{tikzpicture}
    \node[shape=circle,draw=black,fill,inner sep=2pt] (a) at (1,0) {};
    \node[shape=circle,draw=black,fill,inner sep=2pt] (b) at (2,0) {};
    \path[-] (a) edge (b); \node (d) at (1.5,-0.5) {$g_{2}$};
  \end{tikzpicture}\hspace{10pt}
  \begin{tikzpicture}
    \node[shape=circle,draw=black,fill,inner sep=2pt] (a) at (0,0) {};
    \node[shape=circle,draw=black,fill,inner sep=2pt] (b) at (0.7,0.7)
    {}; \node[shape=circle,draw=black,fill,inner sep=2pt] (c) at
    (1.4,0) {}; \path[-] (a) edge (b); \path[-] (b) edge (c); \node
    (d) at (0.7,-0.5) {$g_{3}$};
  \end{tikzpicture}
  \caption{Some graphs and corresponding notation.}
  \label{fig:basicGraphs}
\end{figure}

The computation of the multiplicities of various clusters needs to be
carried out separately for the Cayley tree and the Bethe lattice. To
motivate this, we first consider the 1D chain.

In a 1D chain of $N$ vertices, there are always $N+1-j$ ways to embed
a $j$-chain, so that $M_j = N+1-j$. The linked cluster expansion then
becomes straightforward. From Eq.~\eqref{eq:8}, we get
\begin{equation}
  \label{eq:5}
  W_{N}(O) = O_{N} - \sum_{j=1}^{N-1} (N+1-j)W_{j}(O).
\end{equation}
The result for the observable per unit volume is then given by
Eq.~\eqref{eq:18} Here we consider the ``infinite'' system to have
size $N'\gg N$. Since $M'_c$ are the multiplicities of clusters
embedded in the infinite system per unit volume, we get
$M'_j = (N'+1-j)/N' \to 1$ since $j\leq N\ll N'$. Combining this with
Eq.~\eqref{eq:18} we get a particularly simple result for 1D chains,
\begin{equation}
  \label{eq:39}
  \lim_{N'\to\infty}O/N'= \sum_{j=1}^{N}W_j.
\end{equation}
It is critical to note that this result is applicable for all models
on a 1D-chain, quantum or classical, since the result does not assume
an underlying Hamiltonian or a specific observable $O$.

Consider now an infinite ($N'\to\infty$ vertices) 2D square lattice. Figure~\ref{fig:basicGraphs} shows our notation
for some graphs. For $c=g_{1}$, there are $N'$ ways
to embed this cluster, and we obtain $M'_{g_1} = 1$. For $c=g_{2}$,
there are $\sim 2N'$ ways to embed it because of the vertical and
horizontal edges, so we end up with $M'_{g_{2}} = 2$ in the limit. In
this fashion, the multiplicity has to be computed for every
cluster $c$.  Table \ref{tab:Mc-2D} shows the multiplicities for some
clusters embedded in infinite 2D square and triangular
lattices. Again, we note that these values are independent of the
Hamiltonian, or whether it is quantum or classical, and only depend on
the structure of the lattice.
\begin{table}[h]
  \centering
  \begin{tabular}{|c|c|c|}
    \hline
    Cluster & $M'_{c,\square}$ & $M'_{c,\triangle}$ \\
    \hline
    \begin{tikzpicture}
      \node[shape=circle,draw=black,fill,inner sep=1pt] (a) at (0,0)
      {};\end{tikzpicture}
            & 1 & 1\\
    \hline
    \begin{tikzpicture}
      \node[shape=circle,draw=black,fill,inner sep=1pt] (a) at
      (-0.2,0) {}; \node[shape=circle,draw=black,fill,inner sep=1pt]
      (b) at (0.2,0) {}; \path[-] (a) edge (b);
    \end{tikzpicture}
            & 2 & 3\\
    \hline
    \begin{tikzpicture}
      \node[shape=circle,draw=black,fill,inner sep=1pt] (a) at (0,0)
      {}; \node[shape=circle,draw=black,fill,inner sep=1pt] (b) at
      (0.2,0.2) {}; \node[shape=circle,draw=black,fill,inner sep=1pt]
      (c) at (0.4,0) {}; \path[-] (a) edge (b); \path[-] (b) edge (c);
    \end{tikzpicture} & 6  & 15 \\
    \hline
    \begin{tikzpicture}
      \node[shape=circle,draw=black,fill,inner sep=1pt] (a) at (0,0)
      {}; \node[shape=circle,draw=black,fill,inner sep=1pt] (b) at
      (0.2,0.2) {}; \node[shape=circle,draw=black,fill,inner sep=1pt]
      (c) at (0.4,0) {}; \path[-] (a) edge (b); \path[-] (b) edge (c);
      \path[-] (a) edge (c);
    \end{tikzpicture} & 0 & 2 \\
    \hline
    \begin{tikzpicture}
      \node[shape=circle,draw=black,fill,inner sep=1pt] (a) at (0,0)
      {}; \node[shape=circle,draw=black,fill,inner sep=1pt] (b) at
      (0.2,0.2) {}; \node[shape=circle,draw=black,fill,inner sep=1pt]
      (c) at (0.4,0) {}; \node[shape=circle,draw=black,fill,inner
      sep=1pt] (d) at (0.6,0.2) {}; \path[-] (a) edge (b); \path[-]
      (b) edge (c); \path[-] (c) edge (d);
    \end{tikzpicture} & 18 & 69\\
    \hline
    \begin{tikzpicture}
      \node[shape=circle,draw=black,fill,inner sep=1pt] (a) at (0,0)
      {}; \node[shape=circle,draw=black,fill,inner sep=1pt] (b) at
      (0.2,0.14) {}; \node[shape=circle,draw=black,fill,inner sep=1pt]
      (c) at (0.4,0) {}; \node[shape=circle,draw=black,fill,inner
      sep=1pt] (d) at (0.2,0.4) {}; \path[-] (a) edge (b); \path[-]
      (b) edge (c); \path[-] (b) edge (d);
    \end{tikzpicture} & 4 & 18 \\
    \hline
    \begin{tikzpicture}
      \node[shape=circle,draw=black,fill,inner sep=1pt] (a) at (0,0)
      {}; \node[shape=circle,draw=black,fill,inner sep=1pt] (b) at
      (0,0.3) {}; \node[shape=circle,draw=black,fill,inner sep=1pt]
      (c) at (0.3,0.3) {}; \node[shape=circle,draw=black,fill,inner
      sep=1pt] (d) at (0.3,0) {}; \path[-] (a) edge (b); \path[-] (b)
      edge (c); \path[-] (c) edge (d); \path[-] (d) edge (a);
    \end{tikzpicture} & 1 & 2\\
    \hline
    \begin{tikzpicture}
      \node[shape=circle,draw=black,fill,inner sep=1pt] (a) at (0,0)
      {}; \node[shape=circle,draw=black,fill,inner sep=1pt] (b) at
      (0,0.3) {}; \node[shape=circle,draw=black,fill,inner sep=1pt]
      (c) at (0.3,0.3) {}; \node[shape=circle,draw=black,fill,inner
      sep=1pt] (d) at (0.3,0) {}; \path[-] (a) edge (b); \path[-] (b)
      edge (c); \path[-] (c) edge (d); \path[-] (d) edge (a); \path[-]
      (a) edge (c);
    \end{tikzpicture} & 0 & 2 \\
    \hline
  \end{tabular}
  \caption{Combinatoric factors $M_{c}$ for various clusters in the 2D
    square and triangular lattices}
  \label{tab:Mc-2D}
\end{table}

We now proceed to obtain the multiplicities $M'_{c}$ for the Bethe
lattice and the Cayley tree. Before we proceed, we reiterate the
distinction between the two using the difference in the average degree
of a vertex. In a Cayley tree on $N$ vertices, there are $N/2+1$
boundary vertices that have degree 1, and all other vertices have
degree 3.  The average degree (also known as connectivity) is then
\begin{equation}
  \label{eq:12}
  c_{\mathrm{CT}} = \frac{1}{N}\left[3\left(\frac{N}{2}-1\right) + \left(\frac{N}{2}+1\right)\right] = 2 - \frac{2}{N}.
\end{equation}
For an $m$-Cayley tree, the bulk vertices have degree $m$ and the
boundary vertices have degree 1. For a graph with $N$ vertices, there
are $\frac{N(m-2)+2}{m-1}$ boundary vertices, and we recover the above
result for arbitrary $m$.

For a Bethe lattice, however, there is no boundary, and all vertices
have the same degree, $m$, giving
\begin{equation}
  \label{eq:13}
  c_{\rm{BL}} = m.
\end{equation}
In the thermodynamic limit of the Cayley tree, we approach $c=2$ for
any $m$, which is identical to the connectivity of a 1D chain. This
is one way of understanding why the Cayley tree has the same partition
function as the 1D chain. Another way is using the equivalence
established in Sec.~\ref{sec:Ztree}.

Below we develop the linked cluster expansion for these $m=3$ lattices
and obtain the partition function for the Cayley Tree and the Bethe
Lattice.

\subsection{Cayley tree}
\label{sec:cayley-tree}

First, we calculate multiplicities for the Cayley tree and show that
we indeed reproduce the result of a 1D lattice.

Consider a Cayley tree with $N'$ vertices.  We have
$M'_{g_{1}} = N'/N'\to1$.  For $g_{2}$, we count a total of $N'-1$
edges (each edge is a $g_{2}$) giving us $M'_{g_{2}}=(N'-1)/N'\to 1$
in the limit of large $N'$.  For $g_{3}$, each vertex has three ways
of embedding $g_{3}$, except the boundary vertices.  We therefore have
to subtract $N'/2+1$ vertices since one cannot embed a $g_{3}$
centered on a boundary vertex.  This gives us
$M'_{g_{3}} = 3(N'/2-1)/N'\to 3/2$. The calculation becomes more
tedious from here onwards due to increasing complexity of the
clusters. Table \ref{tab:Mc-Bethe} shows the multiplicities for some
higher order clusters.
\begin{table}[h]
  \centering
  \begin{tabular}{|c|c|c|}
    \hline
    Cluster & $M'_{c,CT}$ & $M'_{c,BL}$\\
    \hline
    \begin{tikzpicture}
      \node[shape=circle,draw=black,fill,inner sep=1pt] (a) at (0,0)
      {};\end{tikzpicture}
            & 1 & 1\\
    \hline
    \begin{tikzpicture}
      \node[shape=circle,draw=black,fill,inner sep=1pt] (a) at
      (-0.2,0) {}; \node[shape=circle,draw=black,fill,inner sep=1pt]
      (b) at (0.2,0) {}; \path[-] (a) edge (b);
    \end{tikzpicture}
            & 1 & $\nicefrac32$ \\
    \hline
    \begin{tikzpicture}
      \node[shape=circle,draw=black,fill,inner sep=1pt] (a) at (0,0)
      {}; \node[shape=circle,draw=black,fill,inner sep=1pt] (b) at
      (0.2,0.2) {}; \node[shape=circle,draw=black,fill,inner sep=1pt]
      (c) at (0.4,0) {}; \path[-] (a) edge (b); \path[-] (b) edge (c);
    \end{tikzpicture} & $\nicefrac32$ & 3 \\
    \hline
    \begin{tikzpicture}
      \node[shape=circle,draw=black,fill,inner sep=1pt] (a) at (0,0)
      {}; \node[shape=circle,draw=black,fill,inner sep=1pt] (b) at
      (0.2,0.2) {}; \node[shape=circle,draw=black,fill,inner sep=1pt]
      (c) at (0.4,0) {}; \node[shape=circle,draw=black,fill,inner
      sep=1pt] (d) at (0.6,0.2) {}; \path[-] (a) edge (b); \path[-]
      (b) edge (c); \path[-] (c) edge (d);
    \end{tikzpicture} & 2 & 6\\
    \hline
    \begin{tikzpicture}
      \node[shape=circle,draw=black,fill,inner sep=1pt] (a) at (0,0)
      {}; \node[shape=circle,draw=black,fill,inner sep=1pt] (b) at
      (0.2,0.14) {}; \node[shape=circle,draw=black,fill,inner sep=1pt]
      (c) at (0.4,0) {}; \node[shape=circle,draw=black,fill,inner
      sep=1pt] (d) at (0.2,0.4) {}; \path[-] (a) edge (b); \path[-]
      (b) edge (c); \path[-] (b) edge (d);
    \end{tikzpicture} & \nicefrac12 & 1\\
    \hline
    \begin{tikzpicture}
      \node[shape=circle,draw=black,fill,inner sep=1pt] (a) at (0,0)
      {}; \node[shape=circle,draw=black,fill,inner sep=1pt] (b) at
      (0.2,0.2) {}; \node[shape=circle,draw=black,fill,inner sep=1pt]
      (c) at (0.4,0) {}; \node[shape=circle,draw=black,inner sep=1pt]
      (d) at (0.6,0.2) {}; \node[shape=circle,draw=black,fill,,inner
      sep=1pt] (e) at (0.8,0)
      {};\node[shape=circle,draw=black,fill,inner sep=1pt] (f) at
      (1.0,0.2) {}; \node[shape=circle,draw=black,fill,inner sep=1pt]
      (g) at (1.2,0.0) {}; \path[-] (a) edge (b); \path[-] (b) edge
      (c); \path[densely dotted] (c) edge (d); \path[densely dotted]
      (d) edge (e); \path[-] (e) edge (f); \path[-] (f) edge (g);
      \draw[thin, decoration={ brace, mirror, raise=0.1cm }, decorate
      ] (a) -- (g) node [pos=0.5,anchor=north,yshift=-0.1cm] {$n$
        vertices};
    \end{tikzpicture} & $\begin{cases}
      2^{\frac{n}{2}-1},\, n \text{ even}, n\geq 2\\
      3\cdot 2^{\frac{n-5}{2}} ,\, n \text{ odd}, n\geq 3.
    \end{cases}$
            & $3\cdot 2^{n-3}, \, n\geq 1$\\
    \hline
  \end{tabular}
  \caption{Lattice constants (multiplicities) for various clusters on
    the Cayley tree ($M'_{c,CT}$) and the Bethe lattice ($M_{c,BL}'$)
    with $m=3$.}
  \label{tab:Mc-Bethe}
\end{table}
However, as noted before, the weights $W_{3}$ and beyond vanish for the classical spin-$\nicefrac12$ Ising model,
therefore these multiplicities are irrelevant, and we get a partition
function:
\begin{equation}
  \label{eq:14}
  \begin{split}
    -\beta f = \lim_{N'\to\infty}\frac{\log Z}{N'} &= M'_{1}W_{1} + M'_{2}W_{2} \\
    &=\log 2 + \log\cosh\beta J \\
    &= \log [2\cosh(\beta J)],
  \end{split}
\end{equation}
where $f$ is the free energy per unit volume (number of
vertices). This result is identical to the 1D chain. Nevertheless, we
note that the model has a known critical point that only becomes
manifest when we compute the zero-field susceptibility.

\subsection{Bethe lattice}
\label{sec:bethe-lattice}

The Bethe lattice is in a sense already in the thermodynamic limit
since it does not have a boundary. As seen from the average
connectivity, we cannot treat the Bethe lattice as the thermodynamic
limit of the Cayley tree. There is no consistent way to define the
partition function of a finite part of the Bethe lattice, since that
notion is ill-defined. The LCE gives us the partition function per
unit volume, so there is some hope that we can effectively divide by
the already infinite volume since we do not take a limit in the
process.

An application of the LCE gives us a different multiplicity
$M'_{g_{2}} = 3/2$ on the Bethe lattice; each vertex is connected to
three edges, and each edge is double counted. We cannot use the ``edge
counting'' method we used for the Cayley tree because one cannot
terminate the Bethe lattice. More generally, we cannot calculate the
total number of ways of embedding a given cluster in a ``finite
but large'' graph and then divide out by the volume and take the limit. A
``finite but large'' graph does not exist for the Bethe lattice. In other to
calculate multiplicities, we have to work ``intensively'' by counting
the number of ways to embed a given cluster at a given vertex and then
correcting for any multiple-counting. For the case of a lattice whose
boundary is always negligible in the limit (1D chain, square, etc.)
these two methods coincide.

Further, since $W_{3}$ and above are zero, we do not need to calculate
higher multiplicities (see Table~\ref{tab:Mc-Bethe} for some of these;
we use them in the susceptibility calculation in Section
\ref{sec:ising-model-mag}), and we end up with a partition function
per unit volume given by
\begin{equation}
  \label{eq:15}
  -\beta f = \log2 + \frac32\log\cosh(\beta J)
\end{equation}
where $f$ is the free energy per site.  This, remarkably, is the
correct free energy for the Ising model on the Bethe
lattice\cite{baxter2007}.  The result is an analytic function of
$\beta$ and therefore one might naively assume that there is no phase
transition for $\beta < \infty$. However, this is a known oddity with
the Ising model on the Bethe lattice, and the model indeed has a
finite temperature phase transition that only becomes manifest when
one computes the zero-field magnetic susceptibility.

We note generally that in an Ising model with a phase transition, at
$T<T_{c}$, in the absence of an external magnetic field, there is
nothing to break the symmetry to determine whether the majority of the
spins point up or down. On a finite lattice, one could pin boundary
spins, but we do not have that luxury on the Bethe lattice.  The only
option we're left with is to calculate the free energy in the presence
of a magnetic field, find the susceptibility, and study it for
non-analyticity.

\section{Ising model with a magnetic field}
\label{sec:ising-model-mag}

We now turn on a small magnetic field $H\ll J$ and study the free
energy in the presence of this small field:
\begin{equation}
  \label{eq:11}
  H = -\sum_{\langle i j \rangle}Js_{i}s_{j} - H\sum_{j}s_{j}
\end{equation}
For $J>0$, the model is ferromagnetic. At zero temperature, all spins
are aligned and point along the external field.  Note that in the
presence of a magnetic field, the conditions for Theorem
\ref{thm:tree1d} no longer hold and generally the tree is not
equivalent to a 1D chain.  In principle then, the partition function
on all branched clusters will have to be calculated separately, and
one does not generally expect their weights to go to
zero. Nevertheless, a simplification occurs at lowest order in the
external field.

Since the free energy has to be an even function of $H$, we will keep
terms to $O(H^{2})$ and discard the rest. We begin with the 1D chain.

\begin{figure}[h]
  \begin{tikzpicture}
    \node[shape=circle,draw=black,fill,inner sep=2pt] (a) at (0,0) {};
    \node[shape=circle,draw=black,fill,inner sep=2pt] (b) at (0.7,0.7)
    {}; \node[shape=circle,draw=black,fill,inner sep=2pt] (c) at
    (1.4,0) {}; \node[shape=circle,draw=black,fill,inner sep=2pt] (d)
    at (2.1,0.7) {}; \path[-] (a) edge (b); \path[-] (b) edge (c);
    \path[-] (c) edge (d); \node (e) at (1.05,-0.5) {$g_{4}$};
  \end{tikzpicture}\hspace{10pt}
  \begin{tikzpicture}
    \node[shape=circle,draw=black,fill,inner sep=2pt] (a) at (0,0) {};
    \node[shape=circle,draw=black,fill,inner sep=2pt] (b) at (0.7,0.5)
    {}; \node[shape=circle,draw=black,fill,inner sep=2pt] (c) at
    (1.4,0) {}; \node[shape=circle,draw=black,fill,inner sep=2pt] (d)
    at (0.7,1.25) {}; \path[-] (a) edge (b); \path[-] (b) edge (c);
    \path[-] (b) edge (d); \node (e) at (0.7,-0.5) {$g_{4'}$};
  \end{tikzpicture}
  \caption{4-vertex graphs}
  \label{fig:4vertex}
\end{figure}
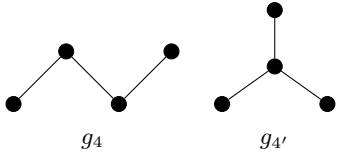

\subsection{1D chain}
\label{sec:1d-chain-mag}

We study the weights for graphs $g_{1}$ ,$g_{2}$, and $g_{3}$.
Denoting $\log Z_{g} = -\beta F_{g}$ for a graph $g$, and switching to
variables $K=\beta J$ and $h=\beta H$, we have
\begin{equation}
  \label{eq:19}
  \begin{split}
    -\beta F_{g_{1}} &= \log2 + \log\cosh( h)\approx \log2 + \frac{h^{2}}{2},\\
    -\beta F_{g_{2}} &\approx \log[4\cosh(K)] +h^{2}[1+\tanh(\beta J)],\\
    -\beta F_{g_{3}} &\approx \log[8\cosh^{2}(K)] + \\
    & \frac{h^{2}}{4}\cosh^{-2}(K)\big[1 + 5\cosh(2K) +
    4\sinh(2K)\big].
  \end{split}
\end{equation}
Calculating the weights, we get
\begin{equation}
  \label{eq:20}
  \begin{split}
    W_{g_{1}} &= \log2 + \frac{h^{2}}{2},\\
    W_{g_{2}} &= \log\cosh(K) + h^{2}\tanh(K),\\
    W_{g_{3}} &= h^{2}\tanh^{2}(K).
  \end{split}
\end{equation}
In this case, $W_{3}$ does not go to zero to $O(h^{2})$. This is to be
expected.  At the next level to the same order in $h$, we get
\begin{equation}
  \label{eq:21}
  W_{g_{4}} = h^{2}\tanh^{3}(K).
\end{equation}

Carrying on, at generic order $n\geq 3$, we get
\begin{equation}
  \label{eq:22}
  W_{g_{n}} = h^{2}\tanh^{n-1}(K),
\end{equation}
giving
\begin{equation}
  \label{eq:23}
  \begin{split}
    -\beta f_{\mathrm{1D}} &= \sum_{j=1}^{\infty}W_{j}\\
    &= \log[2\cosh(K)] + h^{2}\left[\frac12 +
      \sum_{j=1}^{\infty}\tanh^{j}(K)\right]\\
    &= \log[2\cosh(K)] + \frac{h^{2}}{2}e^{2K},
  \end{split}
\end{equation}
where we can sum the series for all $0\leq K < \infty$, since
$\tanh(K)<1$ in this range.  We can obtain the low-field magnetization
density by
\begin{equation}
  \label{eq:24}
  m_{\mathrm{1D}} = -\beta\frac{\partial f_{\mathrm{1D}}}{\partial h} = h e^{2K} =  \frac{H}{k_{B}T} e^{2J/k_{B}T}.
\end{equation}
There are clearly no singularities in $T$ at $h=0^{+}$, consistent
with the fact that there is no ordered phase for $T>0$ and zero field
for the 1D chain.


\subsection{Cayley tree}
\label{sec:cayley-tree-1}

Considering only the chain like clusters (a direct calculation of
branched clusters at this order shows that they do not contribute to
the weights; see Appendix \ref{appsec:branched}), we use the weights
derived above and calculate the free energy density from the
multiplicities in Table~\ref{tab:Mc-Bethe} to get
\begin{multline}
  \label{eq:29}
  -\beta f_{CT} = \log[2\cosh(K)] + \frac{h^{2}}{2}\Bigg[1 + \\
  + \left\{\frac{1}{\tanh(K)} +
    \frac32\right\}\sum_{j=1}^{\infty}2^{j}\tanh^{2j}(K)\Bigg].
\end{multline}
The above sum converges for $2\tanh^{2}(K)<1$. We get
\begin{equation}
  \label{eq:30}
  -\beta f_{CT} = \log[2\cosh(K)] + \frac{h^{2}}{2}\frac{\{1+\tanh(K)\}^{2}}{1-2\tanh^{2}(K)}.
\end{equation}
The above expression has a singularity at
$K_{c,CT}=\tanh^{-1}(1/\sqrt{2})$ indicating a critical point. This is
in fact a well-known result, and what we see here is only the first of
a chain of critical points from $K_{c,CT}$ to $K_{c,BL}$ obtained in
the next section\cite{muller-hartmann1974}. The other critical points
appear at higher order in $h$ and for $K<K_{c,CT}$. However, at higher
order, the branched clusters cannot be neglected and it is not
straightforward to obtain the other singularities analytically using
this method.

\subsection{Bethe lattice}
\label{sec:bethe-lattice-mag}

First, we note that the multiplicity for $g_{3}$ on the Bethe lattice
is given by $M'_{g_{3}} = 3$ since we can embed $g_{3}$ in 3 ways at
every vertex.  For $g_{4}$, we note that starting at any vertex of the
Bethe lattice, we can choose a ``path'' for $g_{4}$ in $3\times2\times2$
ways. Since the opposite path exists starting at a different vertex,
this path is double counted, giving $M'_{g_{4}}= 6$.  In fact, this can
be immediately generalized to all the chain graphs,
$M'_{g_{n}} = 3\times 2^{n-3}$ for $n\geq 2$.  The graphs with branches
are a little more complicated. $g_{4'}$ embeds uniquely at every
vertex, and therefore, $M'_{g_{4'}}=1$. Note that for the chain graphs,
the multiplicities rise exponentially.  For the branched graphs,
however, each time a new branch is introduced the multiplicity falls
because the Bethe lattice has a very specific branching structure.
For a fixed branch structure, the multiplicities grow as we make the
chain longer; each branched structure then produces its own cascade of
chains.

We begin by only considering the chain graphs since, like for the
Cayley Tree, the branched clusters do not contribute at this order in
$h$. This leads to a free energy given by
\begin{multline}
  \label{eq:26}
  -\beta f_{\mathrm{BL}} =\log2 + \frac32\log\cosh(K) +\\
  + h^{2}\left[\frac12 + \frac34\sum_{j=1}^{\infty}(2\tanh
    K)^{n}\right].
\end{multline}
The sum in the equation above can only be carried out for
$0\leq K < \tanh^{-1}(1/2)$ signaling the possibility of a finite
temperature phase transition. For $K$ in this range, we get
\begin{equation}
  \label{eq:25}
  -\beta f_{\mathrm{BL}} =\log2 + \frac32\log\cosh(K) +
  \frac{h^{2}}{2}\frac{1+\tanh(K)}{1-2\tanh(K)}.
\end{equation}
The corresponding zero-field susceptibility is given by
\begin{equation}
  \label{eq:10}
  \chi_{\mathrm{BL}} = \beta\frac{1+\tanh(K)}{1-2\tanh(K)}.
\end{equation}
\begin{figure}[h]
  \centering \includegraphics[width=8cm]{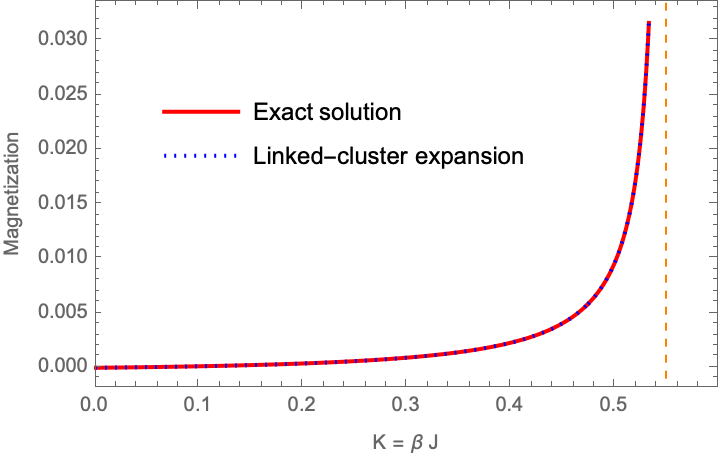}
  \caption{Magnetization as a function of inverse temperature
    $K = \beta J$ in the disordered phase upto the critical point for
    $m=3$. Note that the two methods indicated give exactly the same
    results. In this plot, $h/K = 0.001$. As $h/K\to0$, we are able to
    access regions closer to the critical point. The orange dotted
    line marks the critical point. The equality between the two
    methods holds for $m>3$.}
  \label{fig:magnetization}
\end{figure}
As indicated above, the 0-field susceptibility has a singularity as
$K_{c} = \beta_{c} J \to \tanh^{-1}(1/2)$ from below, precisely the
temperature of the known phase transition of the Ising model on the
Bethe lattice\cite{baxter2007}. For $T \to\infty $, we retrieve the
result of the 1D chain as one can check from a high temperature
expansion, since at infinite temperature, the spins are all
uncorrelated and the magnetization per unit volume is proportional to
the external field and inversely proportional to the temperature.  In
fact, we will see that we recover the correct low-field magnetization
for $T>T_{c}$, i.e., the unordered phase (see
Fig.~\ref{fig:magnetization}). Our calculation thus far cannot predict
the free energy for the ordered phase or the correct magnetization
discontinuity at the critical point. Indeed, our formula gives us the
unphysical result that the magnetization density goes to infinity at
the critical point.  Nevertheless, we are able to extract the critical
temperature.

For an $m$-Bethe lattice, the chain graphs $g_{n}$ for $n\geq 2$ have
multiplicities given by $M'_{g_{n}} = m(m-1)^{n-2}/2$. Considering
again only the chain-like graphs (the branched clusters do not
contribute at this order for all $m$), we can find the corresponding
free energy
\begin{multline}
  \label{eq:17}
  -\beta f_{m\mathrm{BL}} = \log2 + \frac{m}{2}\log\cosh(K) +  \\
  \frac{h^{2}}{2}\left[ \frac{1+\tanh(K)}{1-(m-1)\tanh(K)} \right].
\end{multline}
We therefore obtain a critical temperature of
\begin{equation}
  \label{eq:16}
  \beta_{c}J = \tanh^{-1}\left(\frac{1}{m-1}\right) = \frac12 \log\left(\frac{m}{m-2}\right).
\end{equation}
where the second equality is the more familiar form of this
expression.

The linked-cluster expansion therefore not only gives us the critical
point, but also does much better than a high temperature expansion,
giving us the low-field result in the entire disordered phase for
arbitrary $m$ with minimal effort. Methods using self-similarity (the
general approach to the exact solution) can give us the same results
(see e.g.~Refs.\cite{matsuda1974},\cite{muller-hartmann1974}).
However, the LCE shows us what clusters contribute to the critical
point.  We reiterate that we are able to produce the results of a
mean-field Bethe-Peierls approximation for the disordered phase close
to the phase transition at small fields without engaging with the
self-similar nature of the lattice (equivalent to the consistency
conditions imposed in the Bethe approximation). In fact, we only
consider all possible linear chains, and our result derives purely
from the combinatorics of placing these linear chains in $m$-Bethe
lattices. The various branched clusters do not appear to contribute to
the critical temperature, since it is a zero-field property.

\section{Conclusions}
\label{sec:conclusions}

In this work, we extend the linked cluster expansion to tree graphs,
and obtain lattice constants and multiplicities of the various cluster
embeddings. In particular, we show how the subtle difference between
the Cayley tree and Bethe lattice can be captured using this method,
leading to different results for the corresponding partition
functions. The derived lattice constants can be applied to any
classical or quantum model on these trees, since they depend only on
the lattice structure and not the Hamiltonians or the observables
calculated, therefore laying the groundwork for several future studies
that will potentially extend the precision of numerical
approximations.

The Ising model on the Cayley tree and Bethe lattice have been
extensively studied (despite some confusion in the literature about
how these lattices are defined, especially the language surrounding
thermodynamic limits), and their critical properties are well
known. Nevertheless, in this work, our use of the linked cluster
expansion has revealed several interesting insights about these
models.

First, we have shown that for a classical spin Hamiltonian that can be
cast in the form of Eq.~\eqref{eq:6}, any finite tree lattice can be
mapped onto an equivalent 1D lattice for the purpose of computing the
partition function, and all properties derived from it. This result
explains why branched clusters and cluster with more than 3 vertices
do not contribute to the LCE for trees in the absence of an external
magnetic field.  We use this to show that the finite $m$-Cayley tree
in the absence of a external magnetic field is similar to a 1D chain
on the same number of vertices, and has the same free
energy. Nevertheless, in the thermodynamic limit, even though the free
energy stays identical to that of the 1D chain, the model develops a
nontrivial singularity in the zero-field susceptibility that can be
analytically obtained using the LCE, showing a departure from the 1D
chain.  We then show that the same method applied to the Bethe lattice
provides a different free energy solely from the different
combinatorics of embedding clusters.

We see that the linked cluster expansion, despite being a ``series''
type of approximation method that progressively counts larger finite
clusters, is capable of providing correct solutions for models where
neither a finite lattice, or a thermodynamic limit are well-defined,
indicating that it achieves an elimination of boundary contributions
at every stage or order of the expansion. This feature of the LCE is
the reasons why it overcomes the large [$O(1/N)$] errors that occur in 
calculations using a simple finite size extrapolation based on
the grand-canonical partition function in open systems, and why
we get more rapid convergence to the thermodynamic limit.

We reiterate that we have shown a novel way that the linked cluster
expansion can be used, in places where traditional finite size
extrapolations are fundamentally inapplicable.  We foresee
straightforward application of the developments in this work to
quantum models, including disordered models, and periodically driven
models, thus allowing an alternate method to studying complex
phenomena that can be modeled using trees.

\bibliography{LCE-bethe.bib}

\appendix
\section{Branched clusters at low field for $m=3$}
\label{appsec:branched}

In Section~\ref{sec:cayley-tree-1}, we noted that at lowest order in $h$, the branched clusters do
not contribute any weight. For completeness, we show a few examples of this.

\subsection{One branch chains}
\label{sec:one-branch-chains}

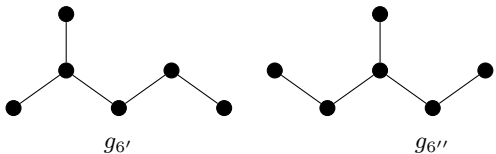
\begin{figure}[h]
  \centering
  \begin{tikzpicture}
    \node[shape=circle,draw=black,fill,inner sep=2pt] (a) at (0,0) {};
    \node[shape=circle,draw=black,fill,inner sep=2pt] (b) at (0.7,0.5)
    {}; \node[shape=circle,draw=black,fill,inner sep=2pt] (c) at
    (1.4,0) {}; \node[shape=circle,draw=black,fill,inner sep=2pt] (d)
    at (0.7,1.25) {}; \node[shape=circle,draw=black,fill,inner
    sep=2pt] (e) at (2.1,0.5)
    {};\node[shape=circle,draw=black,fill,inner sep=2pt] (f) at
    (2.8,0) {}; \path[-] (a) edge (b); \path[-] (b) edge (c); \path[-]
    (b) edge (d); \path[-] (c) edge (e); \path[-] (e) edge (f); \node
    (e) at (1.4,-0.5) {$g_{6'}$};
  \end{tikzpicture}\hspace{10pt}
  \begin{tikzpicture}
    \node[shape=circle,draw=black,fill,inner sep=2pt] (a) at (0,0) {};
    \node[shape=circle,draw=black,fill,inner sep=2pt] (b) at (0.7,0.5)
    {}; \node[shape=circle,draw=black,fill,inner sep=2pt] (c) at
    (1.4,0) {}; \node[shape=circle,draw=black,fill,inner sep=2pt] (d)
    at (0.7,1.25) {}; \node[shape=circle,draw=black,fill,inner
    sep=2pt] (e) at (2.1,0.5)
    {};\node[shape=circle,draw=black,fill,inner sep=2pt] (f) at
    (-0.7,0.5) {}; \path[-] (a) edge (b); \path[-] (b) edge (c);
    \path[-] (b) edge (d); \path[-] (c) edge (e); \path[-] (a) edge
    (f); \node (e) at (1.4,-0.5) {$g_{6''}$};
  \end{tikzpicture}
  \caption{6-vertex graphs with one branch}
  \label{fig:g5g6-branched}
\end{figure}

We first consider chains of all lengths that have one branched vertex
somewhere along the chain. Fig.~\ref{fig:g5g6-branched} shows the two
possibilities at $n=6$. Starting with the smallest branched cluster, $g_{4'}$,
we get for the free energy,
\begin{multline}
  \label{eq:31}
  -\beta F_{g_{4'}} = 4\log 2 + 3\log(\cosh K) + \\h^{2}\big[2 + 3\tanh K(1+\tanh K)\big] + O(h^{4}).
\end{multline}
The corresponding weight is given by
\begin{equation}
  \label{eq:32}
  W_{g_{4'}} = -\beta F_{g_{4'}} - 3W_{g_{3}} - 3W_{g_{2}} - 4W_{g_{1}} = O(h^{4}).
\end{equation}

Similarly, for $g_{5'}$, we get
\begin{multline}
  -\beta F_{g_{5'}} = 5\log 2+ 4\log(\cosh K) + \\h^{2}\Big[\frac52 + 2\tanh K \{2+\tanh K(2+\tanh K)\}\Big]
  + O(h^{4}).
\end{multline}
The corresponding weight is given by
\begin{multline}
  \label{eq:32}
  W_{g_{5'}} = -\beta F_{g_{5'}} - W_{g_{4'}} - 2W_{g_{4}} - 4W_{g_{3}} - 4W_{g_{2}} - 5W_{g_{1}} \\= O(h^{4}).
\end{multline}

This continues to remain true at higher orders, showing that
chains extending out on to one side do not
contribute to the low-magnetic field free energy. This leaves us with
graphs like $g_{6''}$ which, one can show, also have zero weight at this order.

We therefore conclude that at $O(h^{2})$ clusters with a single branch
do not contribute to the free energy and therefore the magnetization.

\subsection{Two branch chains}
\label{sec:two-branch-graphs}

A direct calculation of the partition function and the corresponding weights on clusters with two
branches reveals that their weights are also zero at $O(h^{2})$, leading us to conclude that
this is true for all branched chains.

\end{document}